\date{}
\def\b{\beta}
\def\A{{\cal A}}
\def\M{{\cal M}}
\def\R{{\cal R}}
\def\H{{\cal H}}
\def\S{{\cal S}}
\def\f{{\varphi}}
\def\PSL{{{\rm PSL}(2,\mathbb R)}}
\def\S2{S^{1(2)}}
\newtheorem{theorem}{Theorem}[section]
\newtheorem{lemma}[theorem]{Lemma}
\newtheorem{proposition}[theorem]{Proposition}
\theoremstyle{definition} 
\theoremstyle{remark} 
\newcommand{\ben}{\begin{equation}}
\newcommand{\een}{\end{equation}}
\def\PSL{PSU(1,1)}
\def\SL2{{{\rm SL}(2,\R)}}
\def\PSL2{{{\rm PSL}(2,\Reali)}}
\def\U1{{{\rm V}(1)}}
\def\SU2{{{\rm SV}(2)}}
\def\SU{{{\rm SU}}}
\def\A{{\mathcal A}}
\def\H{{\mathcal H}}
\def\M{{\mathcal M}}
\title{\Huge{Entropy of coherent excitations }}
\author{{\sc Roberto Longo\thanks{Supported by the ERC Advanced Grant 669240 QUEST ``Quantum Algebraic Structures and Models'', MIUR FARE R16X5RB55W  QUEST-NET and GNAMPA-INdAM.}
}
\\
Dipartimento di Matematica,
Universit\`a di Roma ``Tor Vergata'',\\
Via della Ricerca Scientifica, 1, I-00133 Roma, Italy\\
E-mail: {\tt longo@mat.uniroma2.it}
}
\date{}
\begin{document}

\maketitle

\begin{abstract}
We provide a rigorous, explicit formula for the vacuum relative entropy of a coherent state on wedge local von Neumann algebras associated with a free, neutral quantum field theory on the Minkowski spacetime of arbitrary spacetime dimension. We consider charges localised on the time zero hyperplane, possibly crossing the boundary. 
\end{abstract}

\newpage

\section{Introduction}
Recently, much attention has been focussed on quantum information aspects of Quantum Field Theory, an interest that can however be traced back to Black Hole Thermodynamics (see e.g. \cite{Wa}). While the physical literature on the subject is extensive, the mathematical foundations are less studied although, in important cases, the mathematical analysis has led to the right physical framework. This is the case, for example, of the Tomita-Takesaki modular theory of von Neumann algebras (see \cite{T}), which is linked to the KMS thermal equilibrium condition (see \cite{H}) and is essential in this paper.  

We do not attempt here to give an even schematic sketch of the past and actual research on the subject, we rather refer to \cite{L18}, and reference therein, for a rigorous approach to the subject. Indeed, this paper is a continuation of the work done in \cite{L18}, especially in Sect. 4, where a first and detailed analysis of the vacuum relative entropy of a localised state has been provided for the chiral, conformal net of von Neumann algebras associated with $U(1)$-current. Here, we analyse the case of a free scalar field of any spacetime dimension $d+1$ 
(but for the massless $d=1$ case, see however footnote 1). While in the $U(1)$-current case localised states may lead to different representations, in the higher dimensional case a localised state is represented by a vector of the vacuum Hilbert space; we shall consider here only coherent (or exponential)  vectors, that however form a total set vectors in the Hilbert space.

Our main result is contained in Theorem \ref{second}. Let $W$ the wedge region $\{x: x_1 > |x_0|\}$ of the Minkowski spacetime. Let $\f$ be the vacuum state, and $\f_{h_\phi+ k_\pi }$
the coherent state associated with a time-zero field and momentum vector given by smooth, real test functions 
$h$ and $k$ on the time-zero hyperplane. The relative entropy $S(\f_{h_\phi + k_\pi} |\!| \f)$ between the restrictions of $\f_{h_\phi + k_\pi}$ and $\f$ to the local von Neumann algebra $\A(W)$ associated with $W$ is given by:
\[
S(\f_{h_\phi + k_\pi} |\!| \f)  =  \pi\int_{x_1 > 0} x_1   h^2({\bf x})d{\bf x} +  
\pi\int_{x_1 > 0} x_1 \Big(|\nabla k ({\bf x})|^2 + m^2 k^2({\bf x}) \Big) d{\bf x}  \ .
\]
Here, $d{\bf x} = dx_1dx_2\dots dx_d$ is the space volume. The above formula can be expressed also as a space integration detecting boundary entropy contributions (Lemma \ref{ek2}) that are not visible in the one-dimensional chiral case \cite{L18}. 
 
Possible ways to continue our study are discussed in the outlook. 
Finally, we refer to \cite{CF, HO, OT, X18} for recent literature related to our work.

\section{Second quantisation preliminaries}\label{SQ}

Let $\H$ be a complex Hilbert space and  $\Gamma(\H)$  the {\em exponential} of $\H$, i.e. the Bosonic Fock space over $\H$ (also denoted by $e^\H$). Thus
\[
\Gamma(\H)\equiv \bigoplus^\infty_{n=0}\H_s^{\otimes^n}\ ,
\]
$\H_0\equiv\mathbb C\xi$ is the one-dimensional Hilbert space a unit vector $\xi$ called the {\em vacuum} vector, and $\H_s^{\otimes^n}$ is the symmetric Hilbert $n$-fold tensor product of $\H$. 

If ${h}\in\H$, we denote by $e^{h}$ the coherent vector of $e^\H$:
\[
e^{h}\equiv \bigoplus^\infty_{n=0} \frac{1}{\sqrt{n!}}(h^{\otimes^n})_s
\]
where the zeroth component of $e^h$ is $\xi$, thus $e^0 = \xi$. 
One may check that
\[
(e^{h} ,e^{k}) = e^{({h},{k})}\ ,
\]
and $\{e^{h},\ {h}\in\H\}$ is a total family  of independent vectors of $\Gamma(\H)$.

If $U$ is a unitary on $\H$,  $\Gamma(U)$ (or $e^U$) is the unitary on $\Gamma(\H))$
\[
\Gamma(U) = 1\oplus U \oplus (U\otimes U)\oplus (U\otimes U \otimes U)\oplus\cdots
\]
is the unitary second quantization of $U$; if $U$ is anti-unitary, then $\Gamma(U)$ is the similarly defined anti-unitary on $\Gamma(\H)$. Note that $e^U e^{h} = e^{U{h}}$. 

Setting
\[
V({h})e^{k}\equiv e^{-\frac12 ({h},{h})}e^{-\Re({h},{k})}e^{{h}+{k}}
\]
we get an isometry on  $\{e^{h},\ {h}\in\H\}$, that extends to a unitary operator $V({h})$ on $e^\H$. 
The $V({h})$'s are called {\em Weyl unitaries};  the map ${h}\mapsto V({h})$ is norm - strong operator continuous and gives a representation of the Weyl commutation relations
\begin{equation}\label{Weyl}
V({h} +{k}) = e^{i\Im ({h},{k})}V(h)V({k})\ .
\end{equation}
Note that
\begin{equation}\label{cohvacuum}
V({h})\xi = V({h})e^{0} = e^{-\frac12 ({h},{h})}e^{{h}} \ ,
\end{equation}
therefore
\[
(V(k)\xi, V(h)\xi) 
= e^{-\frac12 (||h||^2 + ||k||^2)}(e^k , e^h) 
= e^{-\frac12 (||h||^2 + ||k||^2)}e^{(k,h)} 
\]
and, in particular,
\ben\label{fV}
\f(V({h})) = e^{-\frac12 ||{h}||^2} \ .
\een
where $\f \equiv (\xi,\cdot \xi)$ is the vacuum state.

By the uniqueness of the GNS representation, the above Fock representation is (up to unitary equivalence) the unique representation of the Weyl commutation relations on a Hilbert space with a cyclic vector $\xi$ such that $(\xi,V({h})\xi) = e^{-\frac12 ||{h}||^2}$.

Let $H\subset\H$ be a real linear subspace. We put
\[
R(H) \equiv \{V({h}):\ {h}\in H\}'' \ ,
\]
namely $R(H)$ is the von Neumann algebra on $e^\H$ given by the weak closure of the linear span of the $V({h})$'s as ${h}$ varies in $H$.

The real linear subspace $H$ is said to be a {\it standard subspace} if $H$ is closed and
\[
\overline{H + iH} = \H\ , \qquad H\cap i H = \{0\}  .
\]
We refer to \cite{T} and \cite{L,LN} for the modular theory of von Neumann algebras and standard subspaces. 
\begin{proposition}\label{R(H)basic}
Let $H$ be a standard subspace. We have:
\begin{itemize}
\item[$(a)$] If $K$ a dense  subspace of $H$, then $R(K)= R(H)$;
\item[$(b)$] $\xi$ is a cyclic and separating vector for $R(H)$;
\item[$(c)$] Then the modular unitaries and conjugation associated with $(R(H),\xi)$ are given by
\[
\Delta^{it}_{R(H)} = \Gamma(\Delta^{it}_{H}), \quad J_{R(H)} = \Gamma(J_{H})\ .
\]
\end{itemize}
\end{proposition}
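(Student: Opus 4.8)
The plan is to reduce everything to well-known facts about second quantisation, using the density statement $(a)$ as a flexible tool. For part $(a)$: if $K\subset H$ is dense, then since $h\mapsto V(h)$ is norm--strong operator continuous, for each $h\in H$ the Weyl unitary $V(h)$ is a strong-operator limit of $V(k)$, $k\in K$, hence lies in $R(K)$; thus $R(H)\subseteq R(K)$, and the reverse inclusion is trivial. This immediately reduces parts $(b)$ and $(c)$ to the case where $H$ is taken to be, say, the domain of $\log\Delta_H$ intersected with the domain of the relevant quadratic forms, whenever that is convenient.

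For part $(b)$: cyclicity of $\xi$ for $R(H)$ follows because $V(h)\xi = e^{-\frac12\|h\|^2}e^h$ by \eqref{cohvacuum}, so the span of $R(H)\xi$ contains all coherent vectors $e^h$ with $h\in H$, and since $\overline{H+iH}=\H$ together with the identity $(e^h,e^k)=e^{(h,k)}$ shows that $\{e^h:h\in H\}$ already has dense span in $\Gamma(\H)$ (a standard real-subspace argument: complex linear combinations of the $e^h$ for $h$ ranging over a real-dense-in-complex set are total). Separating is then equivalent to cyclicity for the commutant; here one uses that $R(H)' \supseteq R(H')$, where $H' = \{k : \Im(k,h)=0 \ \forall h\in H\}$ is again standard (with $\Delta_{H'} = \Delta_H^{-1}$, $J_{H'}=J_H$), so the same argument applied to $H'$ gives that $\xi$ is cyclic for $R(H)'$, hence separating for $R(H)$.

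For part $(c)$, which is the heart of the matter: set $\Delta \equiv \Gamma(\Delta_H)$ and $J\equiv\Gamma(J_H)$, where $\Delta_H, J_H$ are the modular objects of the standard subspace $H$, so $S_H = J_H\Delta_H^{1/2}$ is the closure of $h+ih'\mapsto h - ih'$ on $H+iH$. One must verify the two conditions characterising the modular operator and conjugation of $(R(H),\xi)$: namely that $\Delta^{it}R(H)\Delta^{-it} = R(H)$ with $\Delta^{it}\xi=\xi$, that $JR(H)J = R(H)'$ with $J\xi=\xi$, and that the closed operator $S = J\Delta^{1/2}$ agrees with $S_0 : A\xi \mapsto A^*\xi$. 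The covariance is easy from $\Gamma(\Delta_H^{it})V(h)\Gamma(\Delta_H^{-it}) = V(\Delta_H^{it}h)$ and $\Delta_H^{it}H = H$; similarly $J$-covariance reduces to $J_H H = H'$. The remaining point is to identify $S$ on the coherent vectors: compute $S\, V(h)\xi$ using $\Delta^{1/2} e^{k} = e^{\Delta_H^{1/2}k}$ (on a suitable core) and $J e^k = e^{J_H k}$, and check it equals $V(h)^*\xi = V(-h)\xi$, at least for $h$ in a core of $S_H$; one then invokes part $(a)$ to pass from this core to all of $H$. The main obstacle I anticipate is precisely this unbounded-operator bookkeeping: $\Delta$ is unbounded, the coherent vectors $e^k$ need not lie in its domain unless $k\in D(\Delta_H^{1/2})$, and one must argue that the set of such coherent vectors is a core for $S_0$ and that the computed action extends by closure to all of $S$. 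This is handled by working with $K = H\cap D(\Delta_H^{1/2})$ (dense in $H$ since $H$ is standard), applying $(a)$ to replace $R(H)$ by $R(K)$, and checking the core property there; everything else is routine functional calculus on Fock space.
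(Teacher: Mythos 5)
The paper does not actually prove this proposition: it is quoted as a standard fact of second quantisation (Araki; Eckmann--Osterwalder; Leyland--Roberts--Testard; see \cite{L,LN}), so there is no in-text argument to compare yours with. Your outline of $(a)$ and $(b)$ is the standard route and is essentially correct: $(a)$ follows from the norm--strong continuity of $h\mapsto V(h)$; cyclicity follows from totality of $\{e^h:h\in H\}$, though this needs the analytic-continuation argument in the parameters of $\lambda\mapsto e^{\lambda_1h_1+\cdots+\lambda_nh_n}$ and not merely the identity $(e^h,e^k)=e^{(h,k)}$; and reducing ``separating'' to cyclicity of $\xi$ for $R(H')\subseteq R(H)'$ is right. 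One simplification: for a standard subspace $H\subseteq D(S_H)=D(\Delta_H^{1/2})$ automatically, and $e^h\in D(\Gamma(\Delta_H)^{1/2})$ whenever $h\in D(\Delta_H^{1/2})$ because $\|e^{\Delta_H^{1/2}h}\|^2=e^{\|\Delta_H^{1/2}h\|^2}<\infty$; so your set $K=H\cap D(\Delta_H^{1/2})$ is all of $H$ and the appeal to $(a)$ at that point is unnecessary (it is still what gives the core property of ${\rm span}\{V(h)\xi\}$, via Kaplansky density for the $*$-algebra generated by the $V(h)$'s).

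The step that does not close as written is the final verification in $(c)$. With the paper's normalisation $V(h)\xi=e^{-\frac12\|h\|^2}e^h$ and $V(h)^*=V(-h)$, one has $S_0V(h)\xi=V(-h)\xi\propto e^{-h}$, whereas $\Gamma(J_H)\Gamma(\Delta_H)^{1/2}e^h=e^{J_H\Delta_H^{1/2}h}=e^{S_Hh}=e^{h}$ for $h\in H$. These are different vectors, so the identity you propose to ``check'' is false in these conventions. Differentiating $Se^{th}=e^{-th}$ at $t=0$ and using closedness shows that $\overline{S_0}$ acts on the one-particle space as $-S_H$, whose polar decomposition is $(-J_H)\Delta_H^{1/2}$; hence $\Delta_{R(H)}=\Gamma(\Delta_H)$ does hold, but $J_{R(H)}=\Gamma(-J_H)=(-1)^N\Gamma(J_H)$, with $N$ the number operator. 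The clean form stated in the proposition is recovered under the convention $V(h)\xi\propto e^{ih}$ (equivalently, by replacing $H$ with $iH$, which leaves $\Delta$ unchanged and flips the sign of $J$). Since the paper only ever uses $\Delta^{it}_{R(H)}=\Gamma(\Delta^{it}_H)$, nothing downstream is affected; but your proof should either adopt that convention explicitly or record the $(-1)^N$ --- as written, the computation you defer to would not confirm the stated formula for $J_{R(H)}$.
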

\noindent
Here, $\Delta_H$ and $J_H$ are the modular operator and the modular conjugation on $\H$ associated with $H$. 

\section{A first general formula}
Recall that the relative entropy between two normal, faithful states $\f_1, \f_2$ of a von Neumann algebra $\M$
is given by Araki's formula  \cite{Ar}
\[
S(\f_1 |\!| \f_2) = - (\xi_1 , \log\Delta_{\xi_2,\xi_1}\xi_1)\ ;
\]
here $\xi_1 , \xi_2$ are any cyclic vector representatives of $\f_1 , \f_2$ on the underlying Hilbert space (that always exist in the standard representation) and $\Delta_{\xi_2,\xi_1}$ is the associated relative modular operator. 

Given a standard subspace $H$ and a vector $h\in H$, we define here the {\it entropy of $h$} with respect to $H$ by
\[
S_h = - (h,\log\Delta_H h) \ .
\]
$S_h = S^H_h$ is strictly positive unless $h=0$ and finite for a dense subset of $H$. If $U$ is a unitary on $\H$, then
\[
S^{UH}_{Uh} = U S^H_h U^*\ .
\]
We shall see that the relative entropy between coherent states is given by entropy of vectors. 

Let $h\in\H$ and $V(h)\xi = e^h/{||e^h||}$ the normalised coherent vector. Suppose $H\subset \H$ is a standard subspace and let $\f_h = (V(h)\xi, \cdot\, V(h)\xi)$ as a state on $R(H)$. Note that
\[
\f_h = \f\cdot{\rm Ad}V(h)^* \big |_{R(H)}\ .
\]
We want to study the relative entropy 
\[
S(\f_h |\!| \f_k)\ , \quad h,k\in H\ ,
\]
between the states $\f_h$ and $\f_k$ of $R(H)$. 
Since
\ben\label{hk}
S(\f_h |\!| \f_k) = S(\f\cdot{\rm Ad}V(-h)V(k) |\!| \f) =  S(\f\cdot{\rm Ad}V(k-h) |\!| \f)
=  S(\f_{k-h} |\!| \f) \ ,
\een
we may restrict our analysis to the case $k=0$, namely $\f_k$ is the vacuum state. 

If $h\in H$, we have
\[
S(\f_h |\!| \f) = - ( \xi , \log\Delta_{V(h)\xi, \xi}\xi)=  - ( V(h)\xi , \log\Delta_{R(H)} V(h)\xi)
\]
with $\Delta_{R(H)}$ the modular operator associated with $(R(H),\xi)$. Therefore
\begin{proposition}
Let $h\in H$. The relative entropy on $R(H)$ between $\f_h$ and $\f$ is given by
\[
S(\f_h |\!| \f) =  - {(h, \log\Delta_H h)} \ .
\]
\end{proposition}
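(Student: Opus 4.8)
The plan is to reduce the statement about the relative entropy on $R(H)$ to a computation purely on the one-particle space $\H$, using Proposition \ref{R(H)basic}(c), which identifies the modular objects of $(R(H),\xi)$ as the second quantizations of the modular objects of the standard subspace $H$. The starting point is Araki's formula together with the two identities displayed just before the statement:
\[
S(\f_h |\!| \f) = - ( \xi , \log\Delta_{V(h)\xi, \xi}\,\xi) = - ( V(h)\xi , \log\Delta_{R(H)}\, V(h)\xi)\ .
\]
The first equality is Araki's formula with $\xi_1 = \xi$ (a cyclic representative of $\f_h$ for $R(H)$, since $V(h)\in R(H)$ when $h\in H$, and $\f_h = \f\cdot{\rm Ad}V(h)^*$) and $\xi_2 = \xi$; the second uses the cocycle/unitary covariance of the relative modular operator, $\Delta_{V(h)\xi,\xi} = V(h)\Delta_{R(H)}V(h)^*$, to move the $V(h)$ onto the vector. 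So the only thing left is to evaluate $( V(h)\xi , \log\Delta_{R(H)}\, V(h)\xi)$.

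Next I would substitute $\Delta_{R(H)} = \Gamma(\Delta_H)$ from Proposition \ref{R(H)basic}(c), so that $\log\Delta_{R(H)} = d\Gamma(\log\Delta_H)$ is the second-quantized (number-operator-weighted) generator. On a coherent vector $e^h$ one has the standard formula $d\Gamma(A)e^h = \big(\text{(sum over one slot of }A)\big)$, which in inner-product form gives $(e^h, d\Gamma(A) e^h) = (h, A h)\, e^{(h,h)} = (h,Ah)\,\|e^h\|^2$ for $A$ a (say bounded, or suitably affiliated) self-adjoint operator; equivalently $(V(h)\xi, d\Gamma(A) V(h)\xi) = (h, A h)$ since $V(h)\xi = e^h/\|e^h\|$ is the normalised coherent vector and $d\Gamma(A)\xi = 0$. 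Applying this with $A = \log\Delta_H$ — and noting that $h$ lies in the domain of $\log\Delta_H$ exactly when $S_h$ is finite, which is the case for a dense set of $h\in H$, the general case following by the lower-semicontinuity/monotone-limit properties of relative entropy — yields
\[
S(\f_h |\!| \f) = -( V(h)\xi , \log\Delta_{R(H)}\, V(h)\xi) = -(h, \log\Delta_H\, h) = S_h^H\ ,
\]
which is the claim.

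The main obstacle, and the only real subtlety, is the domain/unboundedness issue: $\log\Delta_H$ is an unbounded self-adjoint operator, so the coherent-vector identity $(V(h)\xi, d\Gamma(\log\Delta_H) V(h)\xi) = (h,\log\Delta_H h)$ needs to be justified rather than merely formally manipulated — e.g. by first checking it for $h$ in the domain of $\log\Delta_H$ (where the series defining $d\Gamma(\log\Delta_H)e^h$ converges and the coherent vector lies in the form domain), using functional calculus / spectral approximation $\log\Delta_H \mapsto f_n(\log\Delta_H)$ with bounded $f_n \uparrow$, and then extending to all $h\in H$ by the monotone convergence theorem for the positive quadratic forms together with the fact that $S(\f_h|\!|\f)=+\infty$ is permitted. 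One should also record that $-(h,\log\Delta_H h)\ge 0$, consistent with the positivity of relative entropy, which follows because $\log\Delta_H$ restricted to $H$ is a negative operator in the appropriate sense (equivalently, $S_h \ge 0$, as already asserted in the text for the definition of the entropy of a vector). Everything else is bookkeeping with Proposition \ref{R(H)basic} and the elementary algebra of Weyl operators and coherent vectors already set up in Section \ref{SQ}.
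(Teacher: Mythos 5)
Your argument is correct and follows essentially the same route as the paper: both reduce the Fock-space quantity $-(V(h)\xi,\log\Delta_{R(H)}V(h)\xi)$ to the one-particle expression $-(h,\log\Delta_H h)$ via Proposition \ref{R(H)basic}$(c)$. The only real difference is in execution: where you invoke the coherent-vector identity for $d\Gamma(\log\Delta_H)$ and must then wrestle with its justification for an unbounded generator, the paper writes the entropy as $i\frac{d}{ds}(V(h)\xi,\Gamma(\Delta_H^{is})V(h)\xi)\big|_{s=0}$, uses $\Gamma(\Delta_H^{is})e^h=e^{\Delta_H^{is}h}$ to reduce to the scalar function $e^{-\|h\|^2}e^{(h,\Delta_H^{is}h)}$, and differentiates that at $s=0$ --- which yields the same answer while sidestepping the domain issues you correctly flag as the main subtlety.
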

\begin{proof}
\begin{multline*}
S(\f_h |\!| \f) = i\frac{d}{ds}( V(h)\xi , \Delta^{is}_{R(H)} V(h)\xi)\big |_{s=0}
= i\frac{d}{ds}( V(h)\xi , \Gamma(\Delta^{is}_{H}) V(h)\xi)\big |_{s=0} 
\\= i\frac{d}{ds}( V(h)\xi , V(\Delta^{is}_{H}h)\xi)\big |_{s=0} 
 = i e^{- ||h||^2}\frac{d}{ds}e^{(h, {\Delta_H^{is}} h)}\big |_{s=0} 
= - {(h, \log\Delta_H h)}\ .
\end{multline*}
\end{proof}
Note that $S(\f_h |\!| \f)$ is real, thus
\ben\label{Scom}
S(\f_h |\!| \f) = - (h, \log\Delta_H h) = -\Im i(h, \log\Delta_H h) = -\Im (h, i\log\Delta_H h) \ ,
\een
so $S(\f_h |\!| \f)$ may be computed via the symplectic form $\Im(\cdot ,\cdot)$. 

\section{Entropy of local vacuum excitations }
We are going  to consider the Minkowski spacetime $\mathbb R^{d+1}$. 
We shall use the symbol $x$ for a spacetime vector in $\mathbb R^{d+1}$, ${x} = (x_0,x_1,\dots , x_d)$, while a space vector
will be denoted in bold, ${\bf x} = (x_1, \dots , x_d)$. 

Let  $\H= L^2(\mathbb R^{d+1}, d\Omega_m)$ be the one particle Hilbert space of the free scalar Bose field on the Minkowski spacetime $\mathbb R^{d+1}$, space dimension $d >1$, mass $m\geq 0$.\footnote{The case $d= 1$, $m=0$ can be studied by an adaptation of the present discussion as only the derivative of the free field exists here. 
Note however that, in this case, the relative entropy of a null localised charge 
is the sum the relative entropies of the charge chiral components, whose values are given by the $U(1)$-current formula in \cite[Sect. 4]{L18}.}
Here $\Omega_m(p) = \delta(p^2 - m^2)$ is the Lorentz invariant measure on the mass $m$ positive hyperboloid $\mathfrak H_m \subset \mathbb R^{d+1}$. 
If $f$ is a function in the Schwartz space $S(\mathbb R^{d+1})$, we consider $f$ as an element of $\H$ via the embedding
\ben\label{emb}
f \mapsto \sqrt{2\pi}\hat f |_{{\mathfrak H}_m} \  ,
\een
where the Fourier transform is taken w.r.t. the Lorentz signature and is normalised so that Plancherel formula holds.

If $f,g$ are real functions in $S(\mathbb R^{d+1})$ we have
\[
\Im(f,g) = -\frac{i}{2}\int_{\mathfrak H_m}\big(\overline{\hat f({p})}\hat g({p}) - 
\hat f({p})\overline{\hat g({p})}\big)d\Omega_m
\]
More generally, if $f\in S'(\mathbb R^{d+1})$ is a real tempered distribution, we consider $f$ as an element of $\H$ via the embedding \eqref{emb}, provided $\hat f$ restricts to a function on $\mathfrak H_m$ in  $L^2(\mathfrak H_m, d\Omega_m)$. 

Let $h$, $k$ be real functions in $S(\mathbb R^d)$. We consider the distributions $h_\f , k_\pi \in S'(\mathbb R^{d+1})$
\[
h_\phi(x_0,\dots , x_d) =\delta(x_0)h(x_1,\dots , x_d),\quad k_\pi(x_0,\dots , x_d) = \delta'(x_0)k(x_1,\dots , x_d),
\]
that belong to $\H$ (time-zero fields and momenta) and span a dense set of vectors of $\H$. We have
\[
\Im(h_{1\phi}, h_{2\phi})
= \Im(k_{1\pi}, k_{2\pi}) = 0
\]
and
\ben\label{hk'}
\Im(h_\phi, k_\pi) = \frac12\int_{\mathbb R^d} h({\bf x})k({\bf x})d{\bf x} \ .
\een
We shall denote by $\A$ the local net of von Neumann algebras associated with the free scalar field with mass $m\geq 0$. Thus, if $O\subset \mathbb R^{d+1}$, we have
\[
\A(O) = \{V(h): h\in H(O)\}''
\]
where $H(O)$ is the closed, real subspace of the one-particle Hilbert space $\H = L^2(\mathbb R^{d+1}, d\Omega_m)$ given by
\[
H(O) = \{h\in S(\mathbb R^{d+1}),\ h \ {\rm real}, \ {\rm supp}(h)\subset O\}^- 
\]
with the embedding $S(\mathbb R^{d+1})\hookrightarrow L^2(\mathbb R^{d+1}, d\Omega_m)$ given by \eqref{emb}. $H(O)$ is standard if both $O$ and its causal complement $O'$ have non-empty interiors. 

Let $W$ be the wedge $x_1 > |x_0|$ and $\Lambda_W$ the associated boost one-parameter group of transformations of $\mathbb R^{d+1}$. We have 
\[
U\big(\Lambda_W(2\pi s)\big) = \Delta_{H(W)}^{-is}
\]
where $U$ is the unitary representation of the Poincar\'e group on $\H$ and $H(W)\subset \H$ is the standard subspace associated with $W$ \cite{BW}. 

With $f\in S'(\mathbb R^{d+1})$, we set
\[
\partial^W_s f = \frac{d}{ds} f\cdot\Lambda_W(s) \ .
\]
\begin{lemma}\label{Spartial}
If $f\in S'(\mathbb R^{d+1})$, {\rm supp}$(f)\subset W$ and $f\in \H$, the relative entropy between the states $\f$ and $\f_f$ of $\A(W)$ is given by 
\[
S(\f_f |\!| \f) =-2\pi i { (f,  \partial^W_0 f)} = 2\pi  \Im(f,  \partial^W_0 f) \ , 
\]
with $\partial^W_0 = \partial^W_s |_{s=0}$. 
\end{lemma}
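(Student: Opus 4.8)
The plan is to reduce the relative entropy to a one–particle quantity and then to compute the one–particle modular Hamiltonian of $H(W)$ geometrically, via the Bisognano--Wichmann property.

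\textbf{Step 1 (reduction to the one-particle space).} Since $\mathrm{supp}(f)\subset W$ and $f\in\H$, one has $f\in H(W)$, and because $W$ and its causal complement have non-empty interior, $H(W)$ is standard. Hence the relative entropy formula of the previous section (the Proposition giving $S(\f_h |\!| \f) = -(h,\log\Delta_H h)$) applies with $H=H(W)$ and $h=f$, yielding $S(\f_f |\!| \f) = -(f,\log\Delta_{H(W)}f)$, where $\Delta_{H(W)}$ is the one–particle modular operator; equivalently, this is Araki's formula combined with $\Delta^{it}_{R(H(W))}=\Gamma(\Delta^{it}_{H(W)})$ from Proposition \ref{R(H)basic}(c), exactly as in the proof of that Proposition. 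It therefore suffices to identify the action of $\log\Delta_{H(W)}$ on $f$.

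\textbf{Step 2 (the modular Hamiltonian is the boost generator).} By the Bisognano--Wichmann relation $U(\Lambda_W(2\pi s)) = \Delta_{H(W)}^{-is}$ recalled above \cite{BW}, one has $\Delta_{H(W)}^{it} = U(\Lambda_W(-2\pi t))$. Differentiating at $t=0$ on the dense domain of vectors represented by real Schwartz functions — on which $s\mapsto U(\Lambda_W(s))$ acts by $f\mapsto f\cdot\Lambda_W(s)$ and is differentiable in the norm of $\H$ — the left-hand side gives $i\log\Delta_{H(W)}f$ while the right-hand side gives $-2\pi\,\partial^W_0 f$ by the very definition $\partial^W_s f = \frac{d}{ds}f\cdot\Lambda_W(s)$. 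Hence $\log\Delta_{H(W)}f = 2\pi i\,\partial^W_0 f$, the sign being fixed by the orientation convention $U(\Lambda_W(2\pi s)) = \Delta_{H(W)}^{-is}$. Substituting into Step 1 gives the first asserted equality $S(\f_f |\!| \f) = -2\pi i\,(f,\partial^W_0 f)$.

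\textbf{Step 3 (symplectic form) and the main obstacle.} The second equality is then formal: relative entropy is real (indeed non-negative), so $-2\pi i(f,\partial^W_0 f)$ is real, and since $\partial^W_0$ is the skew-adjoint generator of a one-parameter unitary group the pairing $(f,\partial^W_0 f)$ is purely imaginary, whence $-2\pi i(f,\partial^W_0 f) = 2\pi\Im(f,\partial^W_0 f)$; this is \eqref{Scom} specialised to $h=f$, $\log\Delta_H=2\pi i\,\partial^W_0$. The algebraic content being this short, the real work is the functional-analytic bookkeeping, which I expect to be the main obstacle: $f$ is only assumed in $\H$ with support in $W$, not Schwartz, so one must verify that $f\in H(W)$ (approximating by real Schwartz functions supported in $W$, using that the boost flow preserves $W$), that $f$ lies in the form domain of $\log\Delta_{H(W)}$ and that there the abstract generator coincides with the explicitly defined differential operator $\partial^W_0$, and that $(f,\partial^W_0 f)$ — best read either as $\frac{d}{ds}(f,U(\Lambda_W(s))f)\big|_{s=0}$ or as the value in $[0,+\infty]$ of the closed quadratic form of $-\log\Delta_{H(W)}$ — is precisely the quantity appearing in Araki's formula, including the degenerate case where it is $+\infty$. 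These points are handled by a routine mollification argument together with Proposition \ref{R(H)basic}(a) ($R(K)=R(H)$ for $K$ dense in $H$) and lower semicontinuity of the relative entropy; the reduction of $\partial^W_0$ to an explicit action on time-zero fields and momenta, where the formula becomes a spatial integral, is the content of the subsequent results.
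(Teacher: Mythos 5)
Your proposal is correct and follows essentially the same route as the paper: apply the one-particle entropy formula $S(\f_f|\!|\f)=-(f,\log\Delta_{H(W)}f)$ from the preceding Proposition together with \eqref{Scom}, identify $\Delta_{H(W)}^{is}$ with the boost $U(\Lambda_W(-2\pi s))$ via Bisognano--Wichmann, and differentiate at $s=0$; the signs and the reality/skew-adjointness argument for the second equality match the paper's. Your additional remarks on domain issues (that $f\in H(W)$, differentiability of the boost orbit, and the possible value $+\infty$) are points the paper passes over silently, but they do not change the argument.
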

\begin{proof}
by Corollary \ref{Scom} we have
\[
S(\f_f |\!| \f) =   i\frac{d}{ds} { (f, \Delta_H^{is} f)} \big|_{s=0}
=  - i\frac{d}{ds} {(f,  f\cdot\Lambda_W(2\pi s))} \big|_{s=0}
=-2\pi i { (f,  \partial^W_0 f)} \ ,
\]
so $-2\pi i (f,  \partial^W_0 f)$ is real, namely $(f,  \partial^W_0 f)$ is purely imaginary, hence $\Im(f,  \partial^W_0 f) = -i(f,  \partial^W_0 f)$ and
\[
S(\f_f |\!| \f)  = 2\pi  \Im(f,  \partial^W_0 f) \ .
\]
\end{proof}
Now, $\Lambda_W(s)$ acts only on the $x_0, x_1$ variables, 
\[
\Lambda_W(s)=\begin{pmatrix}\cosh s &\sinh s\\ \sinh s &\cosh s \end{pmatrix}, \qquad \frac{d}{ds}\Lambda_W(s)\big|_{s=0} = 
\begin{pmatrix}0 &1\\ 1  & 0 \end{pmatrix} \ ,
\]
so
\ben\label{partial}
\partial^W_0 = x_1 \partial_{x_0}  + x_0\partial_{x_1}  \ ,
\een
with $\partial_{x_k}$ the partial derivative w.r.t.  $x_k$. 
\begin{lemma}\label{eh1}
 Let $h\in S(\mathbb R^d)$ be real with support in the half-plane $x_1 \geq 0$. We have
\[
S(\f_{h_\phi} |\!| \f) =  \pi\int_{\mathbb R^d} x_1 h^2({\bf x})dx \ .
\]
\end{lemma}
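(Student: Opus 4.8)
The plan is to apply Lemma \ref{Spartial} to the distribution $f = h_\phi$. Since $h$ is supported in the half-plane $x_1 \geq 0$ of the time-zero hyperplane, and $W$ is the wedge $x_1 > |x_0|$, the distribution $h_\phi(x) = \delta(x_0)h(\mathbf{x})$ has support in $\overline{W}$; approximating $h$ by functions supported in $x_1 > 0$ (or invoking the fact that the edge of the wedge is negligible) one checks $h_\phi \in H(W)$, so the lemma applies and gives
\[
S(\f_{h_\phi} |\!| \f) = 2\pi\,\Im(h_\phi, \partial^W_0 h_\phi)\ .
\]

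The core of the argument is then to evaluate $\Im(h_\phi, \partial^W_0 h_\phi)$ using the explicit form \eqref{partial}, namely $\partial^W_0 = x_1\partial_{x_0} + x_0\partial_{x_1}$. First I would compute $\partial^W_0 h_\phi$ as a tempered distribution: since $h_\phi = \delta(x_0)h(\mathbf{x})$, we get $x_1\partial_{x_0}h_\phi = x_1 \delta'(x_0)h(\mathbf{x})$ and $x_0\partial_{x_1}h_\phi = x_0\delta(x_0)\partial_{x_1}h(\mathbf{x}) = 0$ because $x_0\delta(x_0) = 0$. Hence $\partial^W_0 h_\phi = (x_1 h)_{\pi}$ in the notation of the paper, i.e. the time-zero momentum distribution associated with the space function $x_1 h(\mathbf{x})$ (which still lies in $S(\mathbb{R}^d)$). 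Now I invoke formula \eqref{hk'}, which says $\Im(g_\phi, \ell_\pi) = \frac{1}{2}\int_{\mathbb{R}^d} g(\mathbf{x})\ell(\mathbf{x})\, d\mathbf{x}$, with $g = h$ and $\ell = x_1 h$:
\[
\Im(h_\phi, \partial^W_0 h_\phi) = \Im\big(h_\phi, (x_1 h)_\pi\big) = \frac{1}{2}\int_{\mathbb{R}^d} x_1 h^2(\mathbf{x})\, d\mathbf{x}\ .
\]
Multiplying by $2\pi$ yields exactly $S(\f_{h_\phi} |\!| \f) = \pi\int_{\mathbb{R}^d} x_1 h^2(\mathbf{x})\, d\mathbf{x}$, and since $h$ is supported in $x_1 \geq 0$ this is the same as $\pi\int_{x_1 > 0} x_1 h^2(\mathbf{x})\, d\mathbf{x}$.

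The step I expect to require the most care is justifying that $h_\phi \in H(W)$ and, relatedly, that the formal distributional computation of $\partial^W_0 h_\phi$ is legitimate at the level of the one-particle Hilbert space $\H = L^2(\mathfrak{H}_m, d\Omega_m)$ — that is, that the Fourier restriction of $(x_1 h)_\pi$ to the mass hyperboloid is genuinely in $L^2(d\Omega_m)$ and that differentiating $s\mapsto h_\phi\cdot\Lambda_W(s)$ under the inner product is valid. The support condition is the subtle point: $h$ may not vanish on $x_1 = 0$, so $h_\phi$ sits on the edge of $W$ rather than strictly inside; one handles this by a density argument, writing $h$ as a limit in $S(\mathbb{R}^d)$ of functions supported in $x_1 \geq \varepsilon$ and using continuity of both sides (the left side via lower semicontinuity / closability of the relative entropy, or simply because the right side is continuous in $h$ and the operator $\partial^W_0$ acts continuously on the relevant space). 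Once these analytic points are secured, the identity \eqref{hk'} does all the remaining work, and no genuinely new estimate is needed.
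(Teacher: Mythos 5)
Your proposal is correct and reaches the right answer, but the computational core is carried out differently from the paper. Both arguments reduce the statement to evaluating $2\pi\,\Im(h_\phi,\partial^W_0 h_\phi)$ via Lemma \ref{Spartial}; the paper then works entirely in momentum space on the mass hyperboloid, writing $\partial^W_0 = p_1\partial_{p_0}+p_0\partial_{p_1}$ on $\hat h_\phi = \hat h$, discarding the $p_1\partial_{p_0}$ term because $\hat h$ is independent of $p_0$, and converting $\int_{\mathfrak H_m} p_0\,\overline{\hat h}\,\partial_{p_1}\hat h\, d\Omega_m$ into $\tfrac{i}{2}\int x_1 h^2\,d{\bf x}$ by Plancherel. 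You instead stay in position space: you identify $\partial^W_0 h_\phi = x_1\delta'(x_0)h + x_0\delta(x_0)\partial_{x_1}h = (x_1h)_\pi$ using $x_0\delta(x_0)=0$, and then invoke the symplectic pairing \eqref{hk'} with $g=h$, $\ell=x_1h$. This is a legitimate and arguably cleaner route for this lemma --- note that the paper itself uses exactly the identities $x_1\partial_{x_0}h_\phi=(x_1h)_\pi$ and $x_0\delta(x_0)=0$ in the proof of Lemma \ref{add}, so your manipulation is consistent with the paper's own standards of rigor. What the paper's Fourier-space route buys is that it transfers verbatim to the boundary-crossing case of Lemma \ref{eh2}, where $h$ is replaced by the truncation $h_+=\chi_{x_1>0}h$ and the position-space identity $\partial^W_0 (h_+)_\phi=(x_1h_+)_\pi$ would require more care since $h_+$ is no longer smooth. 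Your cautionary remarks about $h_\phi$ sitting on the edge of $W$ and about the legitimacy of differentiating under the inner product are well placed; the paper glosses over these points, and a density argument of the kind you sketch (approximating by $h$ supported in $x_1\geq\varepsilon$, using continuity of both sides) is indeed what is needed to apply Lemma \ref{Spartial} as stated.
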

\begin{proof}
 Note first that, in Fourier transform, we have 
 \[
 \widehat{\partial^W_0 f} =\partial^W_0 \hat f 
 \]
 because $\Lambda_W$ is isometric w.r.t. the Lorentz metric. 
Then
\begin{align*}
(h_\phi,{\partial^W_0 h_\phi}) & = 
 \int_{\mathfrak H_m}\overline{\hat h_\phi({\bf p})}\widehat{\partial^W_0 h_\phi}({\bf p})d\Omega_m \\
&  =\int_{\mathfrak H_m}\overline{\hat h_\phi({\bf p})}\partial^W_0\hat{ h}_\phi({\bf p}) d\Omega_m 
 =  \int_{\mathfrak H_m}\overline{\hat h({\bf p})}\partial^W_0 \hat{ h}({\bf p})d\Omega_m\\
& = \int_{\mathfrak H_m}\overline{\hat h({\bf p})}p_1\partial_{p_0} \hat{ h}({\bf p})d\Omega_m
+  \int_{\mathfrak H_m}\overline{\hat h({\bf p})}p_0\partial_{p_1}\hat{ h}({\bf p})d\Omega_m \\
& =  \int_{\mathfrak H_m}\overline{\hat h({\bf p})}p_0\partial_{p_1}\hat{ h}({\bf p})d\Omega_m
= \frac12\int_{\mathbb R^d}\overline{\hat h({\bf p})}\partial_{p_1}\hat{ h}({\bf p})d{\bf p} \\
& = \frac{i}{2}\int_{\mathbb R^d}x_1 h^2({\bf x})d{\bf x} \ ,
\end{align*}
where the first integral in the third line is zero because $\hat h$ does not depend on $p_0$, so
our lemma is proved.  
\end{proof}
\begin{lemma}\label{ek1}
Let $k\in S(\mathbb R^d)$ be real with support in the half-plane $x_1 \geq 0$. We have
\[
S(\f_{k_\pi} |\!| \f) =  
\pi m^2\int_{ \mathbb R^d}x_1 k^2({\bf x})d{\bf x}  +
\pi \int_{ \mathbb R^d}x_1|\nabla k ({\bf x})|^2 d{\bf x} 
\]
(relative entropy on $\A(W)$).
\end{lemma}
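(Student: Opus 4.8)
The plan is to rerun the computation of Lemma \ref{eh1}; the only new feature is that $k_\pi$ carries a derivative of $\delta$ in the time variable instead of $\delta$ itself. First I would invoke Lemma \ref{Spartial} (applicable since $k_\pi=\delta'(x_0)k({\bf x})$ has support in $\overline W$), reducing the task to evaluating $(k_\pi,\partial^W_0 k_\pi)$ and taking $2\pi$ times its imaginary part:
\[
S(\f_{k_\pi}|\!|\f)=2\pi\,\Im(k_\pi,\partial^W_0 k_\pi)\ .
\]

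Next I would pass to momentum space, exactly as in the proof of Lemma \ref{eh1}. Since $k_\pi$ Fourier-transforms in the $x_0$ variable to a multiple of the frequency, its restriction to $\mathfrak H_m$ is $\widehat{k_\pi}(p)=\mp i\,p_0\,\hat k({\bf p})$, i.e.\ the $\widehat{h_\phi}$ pattern of Lemma \ref{eh1} multiplied by $p_0=\sqrt{|{\bf p}|^2+m^2}$ (the overall sign, irrelevant below, is fixed as in Lemma \ref{eh1}). As there, $\widehat{\partial^W_0 k_\pi}=\partial^W_0\widehat{k_\pi}$ with $\partial^W_0=p_1\partial_{p_0}+p_0\partial_{p_1}$ in momentum variables; extending $\widehat{k_\pi}$ to $\mathbb R^{d+1}$ by the same expression (legitimate since $\partial^W_0$ is tangent to $\mathfrak H_m$) one gets $\partial^W_0\widehat{k_\pi}=\mp i\big(p_1\hat k+p_0^2\,\partial_{p_1}\hat k\big)$, hence, with $d\Omega_m=d{\bf p}/(2p_0)$ on the forward hyperboloid,
\[
(k_\pi,\partial^W_0 k_\pi)=\int_{\mathfrak H_m}\overline{\widehat{k_\pi}}\,\partial^W_0\widehat{k_\pi}\,d\Omega_m=\tfrac12\int_{\mathbb R^d}\big(p_1|\hat k|^2+p_0^2\,\overline{\hat k}\,\partial_{p_1}\hat k\big)\,d{\bf p}\ .
\]
Here $\int p_1|\hat k|^2\,d{\bf p}=0$, because $k$ real makes $|\hat k|^2$ even, so the integrand is odd in $p_1$. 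In the surviving term I would use the mass-shell identity $p_0^2=|{\bf p}|^2+m^2$, splitting it into an $m^2$-part, which is $m^2$ times the integral already computed in Lemma \ref{eh1}, hence $\tfrac{i m^2}{2}\int_{\mathbb R^d}x_1 k^2({\bf x})\,d{\bf x}$, and a $|{\bf p}|^2$-part. For the latter, write $|{\bf p}|^2\hat k=\widehat{-\Delta k}$ and $\partial_{p_1}\hat k$ as a multiple of $\widehat{x_1 k}$, apply Plancherel to reach $\tfrac{i}{2}\int_{\mathbb R^d}(-\Delta k)(x_1 k)\,d{\bf x}$, and integrate by parts twice; since $\int(\partial_{x_1}k)\,k\,d{\bf x}=\tfrac12\int\partial_{x_1}(k^2)\,d{\bf x}=0$ for $k$ Schwartz, this reduces to $\tfrac{i}{2}\int_{\mathbb R^d}x_1|\nabla k({\bf x})|^2\,d{\bf x}$. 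Adding the two parts, $(k_\pi,\partial^W_0 k_\pi)=\tfrac{i}{2}\int_{\mathbb R^d}x_1\big(m^2 k^2({\bf x})+|\nabla k({\bf x})|^2\big)\,d{\bf x}$, and $2\pi\,\Im$ of this is the claimed formula.

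The one point needing genuine care is the bookkeeping around the extra factor $p_0$ in $\widehat{k_\pi}$: after $\partial^W_0$ it appears squared, and on $\mathfrak H_m$ the factor $p_0^2$ becomes the energy $|{\bf p}|^2+m^2$, which is precisely the mechanism that generates the two distinct contributions --- the mass term $\pi m^2\int x_1 k^2$ and the gradient term $\pi\int x_1|\nabla k|^2$ (the latter being the one re-expressed later as a boundary entropy). The gradient term is also what forces the double integration by parts above, the only piece of computation not already present in Lemma \ref{eh1}; everything else --- commuting $\partial^W_0$ past the Fourier transform, the vanishing of the odd integral and of the boundary terms --- is exactly as there.
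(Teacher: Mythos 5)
Your proposal is correct and follows essentially the same route as the paper: reduce via Lemma \ref{Spartial} to $2\pi\Im(k_\pi,\partial^W_0 k_\pi)$, pass to momentum space where $\widehat{k_\pi}$ is $p_0\hat k$ up to a phase, split the result into the $p_0p_1|\hat k|^2$ term (which vanishes --- you argue by parity of $|\hat k|^2$, the paper by writing it as the integral of $\partial_{x_1}k^2$, an equivalent observation) and the $p_0^2\overline{\hat k}\partial_{p_1}\hat k$ term, then use the mass-shell relation, Plancherel and integration by parts to land on $\tfrac{i}{2}\int x_1(|\nabla k|^2+m^2k^2)$. The minor variations (treating the $m^2$ and $|{\bf p}|^2$ pieces separately rather than as $(\Delta+m^2)$ at once) do not change the argument.
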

\begin{proof}
We have
\begin{align*}
(k_\pi,{\partial^W_0 k_\pi}) & = 
 \int_{\mathfrak H_m}\overline{\hat k_\pi({\bf p})}\widehat{\partial^W_0 k_\pi}({\bf p})d\Omega_m \\
&  =\int_{\mathfrak H_m}\overline{\hat k_\pi({\bf p})}\partial^W_0\hat{ k}_\pi({\bf p}) d\Omega_m 
 =  \int_{\mathfrak H_m}p_0\overline{\hat k({\bf p})}\partial^W_0 (p_0\hat{ k}({\bf p}))d\Omega_m\\
& = \int_{\mathfrak H_m}p_0\overline{\hat k({\bf p})}p_1\partial_{p_0} (p_0\hat{ k}({\bf p}))d\Omega_m
+  \int_{\mathfrak H_m}p_0\overline{\hat k({\bf p})}p_0\partial_{p_1}(p_0\hat{ k}({\bf p}))d\Omega_m\\
& =  \int_{\mathfrak H_m}p_0p_1\overline{\hat k({\bf p})}\hat{ k}({\bf p})d\Omega_m
+  \int_{\mathfrak H_m}p^3_0\overline{\hat k({\bf p})}\partial_{p_1} \hat{ k}({\bf p})d\Omega_m  \  .
\end{align*}
The first term above, on the last line, vanishes because
\begin{multline*}
\int_{\mathfrak H_m}p_0p_1\overline{\hat k({\bf p})}\hat{ k}({\bf p})d\Omega_m
= \frac12\int_{\mathbb R^d}p_1\overline{\hat k({\bf p})}\hat{ k}({\bf p})d{\bf p}\\
=\frac{i}{2}\int_{\mathbb R^d} k({\bf x})\partial_{x_1}k({\bf x}) d{\bf x} 
=\frac{i}{4}\int_{\mathbb R^d} \partial_{x_1}k^2({\bf x}) d{\bf x} = 0 \ . 
\end{multline*}
Therefore
\begin{multline*}
(k_\pi,{\partial^W_0 k_\pi})= \int_{\mathfrak H_m}p^3_0\overline{\hat k({\bf p})}\partial_{p_1} \hat{ k}({\bf p})d\Omega_m  =
 \frac12\int_{ \mathbb R^d}({\bf p}^2+ m^2)\overline{\hat k({\bf p})}\partial_{p_1} \hat{ k}({\bf p})d{\bf p} \\
= - \frac{i}{2} \int_{ \mathbb R^d}x_1 k({\bf x})(\Delta + m^2){ k({\bf x})}d{\bf x} 
=  \frac{i}{2}\int_{ \mathbb R^d}x_1\big(|\nabla k ({\bf x})|^2 +
 m^2 k^2({\bf x})\big)d{\bf x}  - \frac{i}{2} \int_{ \mathbb R^d} k({\bf x})\partial_{x_1}k({\bf x})d{\bf x} \\
 =  \frac{i}{2}\int_{ \mathbb R^d}x_1\big(|\nabla k ({\bf x})|^2 +
 m^2 k^2({\bf x})\big)d{\bf x}  - \frac{i}2 \int_{ \mathbb R^d} \partial_{x_1}k^2({\bf x})d{\bf x}
=  \frac{i}{2}\int_{ \mathbb R^d}x_1\big(|\nabla k ({\bf x})|^2 +
 m^2 k^2({\bf x})\big)d{\bf x} 
\end{multline*}
with $\Delta = -\sum \partial^2_{x_k}$ the space Laplacian and $\nabla = (\partial_{x_1} , \partial_{x_2}, \dots ,\partial_{x_d})$ 
the gradient. So our statement follows by Lemma \ref{Spartial}. 
\end{proof}
\begin{lemma}\label{add}
Let $h,k\in S(\mathbb R^d)$ be real with support in the half-plane $x_1 >0$. We have
\[
(h_\phi + k_\pi, \partial^W_0 (h_\phi + k_\pi)) = (h_\phi , \partial^W_0 h_\phi )+ (k_\pi, \partial^W_0  k_\pi) \ .
\]
As a consequence, if $h,k$ are supported in $x_1 \geq 0$, then the relative entropies on $\A(W)$ add
\[
S(\f_{h_\phi + k_\pi} |\!| \f) = S(\f_{h_\phi} |\!| \f) + S(\f_{k_\pi} |\!| \f) \ .
\]
\end{lemma}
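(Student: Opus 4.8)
The plan is to reduce the statement to the vanishing of the mixed term $C:=(h_\phi,\partial^W_0 k_\pi)+(k_\pi,\partial^W_0 h_\phi)$, and then to prove $C=0$ by showing that $C$ is simultaneously purely imaginary and real. Granting $C=0$, the first displayed identity is just the expansion of $(h_\phi+k_\pi,\partial^W_0(h_\phi+k_\pi))$; taking imaginary parts and applying Lemma~\ref{Spartial} to $f=h_\phi$, $f=k_\pi$ and $f=h_\phi+k_\pi$ gives $S(\f_{h_\phi+k_\pi}|\!|\f)=2\pi\Im(h_\phi,\partial^W_0 h_\phi)+2\pi\Im(k_\pi,\partial^W_0 k_\pi)=S(\f_{h_\phi}|\!|\f)+S(\f_{k_\pi}|\!|\f)$, which is the second assertion.

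For the imaginary part of $C$: in the proof of Lemma~\ref{Spartial} it is shown that $(f,\partial^W_0 f)$ is purely imaginary whenever $f\in\H$ is a real distribution supported in $W$, since then $(f,\partial^W_0 f)=\frac{i}{2\pi}S(\f_f|\!|\f)$ with $S(\f_f|\!|\f)$ real. When $h,k$ are supported in $\{x_1>0\}$, each of $h_\phi$, $k_\pi$ and $h_\phi+k_\pi$ is such a distribution; expanding $(h_\phi+k_\pi,\partial^W_0(h_\phi+k_\pi))\in i\mathbb R$ and subtracting the two (purely imaginary) diagonal terms leaves $C\in i\mathbb R$.

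For the real part of $C$: I would compute $\partial^W_0$ on the time-zero distributions from \eqref{partial}, using $x_0\delta(x_0)=0$ and $x_0\delta'(x_0)=-\delta(x_0)$, to obtain $\partial^W_0 h_\phi=\delta'(x_0)\,x_1 h({\bf x})=(x_1 h)_\pi$ and $\partial^W_0 k_\pi=\delta''(x_0)\,x_1 k({\bf x})-\delta(x_0)\,\partial_{x_1}k({\bf x})$. Since $\delta''(x_0)u({\bf x})$ and $-\delta(x_0)\big((\Delta+m^2)u\big)({\bf x})$ have the same Fourier transform on the mass hyperboloid, the latter distribution equals, as a vector of $\H$, $-\big((\Delta+m^2)(x_1k)\big)_\phi-(\partial_{x_1}k)_\phi$. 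Thus $\partial^W_0 h_\phi$ is of momentum type and $\partial^W_0 k_\pi$ of field type, so that $(k_\pi,\partial^W_0 h_\phi)$ is a combination of momentum--momentum inner products and $(h_\phi,\partial^W_0 k_\pi)$ of field--field inner products; both are real because $\Im(h_{1\phi},h_{2\phi})=\Im(k_{1\pi},k_{2\pi})=0$. Being both real and purely imaginary, $C=0$. (Alternatively, $C=0$ can be verified directly in Fourier variables: once everything is written over $d{\bf p}$, the two $p_1/p_0$ contributions cancel and an integration by parts in $p_1$ reduces $C$ to $-i\,\Re\!\int_{\mathbb R^d}p_0\,\overline{\hat k}\,\partial_{p_1}\hat h\,d{\bf p}$, which vanishes since that integral is purely imaginary under ${\bf p}\mapsto-{\bf p}$ by the reality of $h$ and $k$.)

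For $h,k$ merely supported in the closed half-space $\{x_1\ge 0\}$ one recovers the identity, and with it the additivity of the relative entropies, by approximating $h,k$ in $S(\mathbb R^d)$ by functions supported in $\{x_1>0\}$ and passing to the limit in the quadratic forms appearing in Lemmas~\ref{eh1} and \ref{ek1}. The step I expect to be most delicate is the rigorous manipulation of the singular time-zero distributions under $\partial^W_0$ --- in particular the identification $\delta''(x_0)\,x_1 k({\bf x})=-\big((\Delta+m^2)(x_1 k)\big)_\phi$ as an equality of vectors of $\H$, and checking that all distributions involved actually belong to $\H$ and to the domain of $\log\Delta_{H(W)}$.
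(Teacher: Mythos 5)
Your argument is correct and follows essentially the same route as the paper: in both cases the point is that the cross term vanishes because, by \eqref{partial} together with $x_0\delta(x_0)=0$, the mixed pairings reduce to field--field and momentum--momentum inner products of time-zero data, whose imaginary parts vanish. The paper is slightly more economical: since $\partial^W_0$ is skew-adjoint, the cross term equals $2i\Im(k_\pi,\partial^W_0 h_\phi)$ outright, so it is automatically purely imaginary and one only needs $\partial^W_0 h_\phi=(x_1h)_\pi$, without computing $\partial^W_0 k_\pi$ or invoking Lemma \ref{Spartial} three times as you do.
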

\begin{proof}
As $\partial^W_0$ is a skew-adjoint operator on $\H$, we have
\[
(h_\phi + k_\pi, \partial^W_0 (h_\phi + k_\pi)) = (h_\phi , \partial^W_0 h_\phi )+ (k_\pi, \partial^W_0  k_\pi) 
 + 2i\Im(k_\pi , \partial^W_0 h_\phi )
\]
 \[
\Im(k_\pi, x_1\partial_{x_0} h_\phi) = \Im(k_\pi, (x_1 h)_\pi) = 0 \ ,
\]
\[
\Im(k_\pi, x_0\partial_{x_1} h_\phi) = \Im(k_\pi, x_0\delta(x_0)\partial_{x_1} h) = 0 
\]as $x_0\delta(x_0) = 0$.  So our statement follows by Lemma \ref{Spartial} and formula \eqref{partial}. 
\end{proof}
By putting together the above results we have:
\begin{theorem}\label{first}
Let $h,k\in S(\mathbb R^d)$ be real and supported in $x_1 \geq 0$. The vacuum relative entropy on $\A(W)$ of the coherent state associated with $h_\phi + k_\pi$ is given by 
\[
S(\f_{h_\phi + k_\pi} |\!| \f) = \pi\int_{\mathbb R^d} x_1 \Big( h^2({\bf x}) +  m^2  k^2({\bf x}) + |\nabla k({\bf x})|^2\Big) d{\bf x} \ .
\]
\end{theorem}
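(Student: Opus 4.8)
The plan is to read this off directly from the three preceding lemmas, which between them already contain all of the analytic work. First I would invoke Lemma~\ref{add}: since $h,k$ are real and supported in the half-plane $x_1\geq 0$, the relative entropies add, so
\[
S(\f_{h_\phi + k_\pi} |\!| \f) = S(\f_{h_\phi} |\!| \f) + S(\f_{k_\pi} |\!| \f)\ .
\]
Next I would substitute the explicit values established in Lemma~\ref{eh1} and Lemma~\ref{ek1}, namely
\[
S(\f_{h_\phi} |\!| \f) = \pi\int_{\mathbb R^d} x_1 h^2({\bf x})\, d{\bf x}\ , \qquad
S(\f_{k_\pi} |\!| \f) = \pi m^2\int_{\mathbb R^d} x_1 k^2({\bf x})\, d{\bf x} + \pi\int_{\mathbb R^d} x_1 |\nabla k({\bf x})|^2\, d{\bf x}\ .
\]
Adding these and collecting the three integrands under one integral sign (legitimate because $h^2$, $k^2$ and $|\nabla k|^2$ are all supported in $x_1\geq 0$) produces exactly the claimed formula.

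Since the substantive estimates are already isolated in the lemmas, there is no genuine obstacle left at the level of the theorem; the only points that need a little care are bookkeeping. One is to confirm that the hypothesis ``supported in $x_1\geq 0$'' in the statement is the one under which Lemma~\ref{add} gives additivity: the inner-product identity in that lemma is clean for supports in the open half-plane $x_1>0$ and passes to $x_1\geq 0$ by continuity/density in $H(W)$, and one should check that $h_\phi,k_\pi$ indeed lie in $\H$ and in $H(W)$ (equivalently, that $h_\phi+k_\pi$ has support in $\overline W$ and is in $\H$) so that Lemma~\ref{Spartial} applies. The other is the innocuous identification, used in the introduction, of $\int_{\mathbb R^d}$ with $\int_{x_1>0}$, valid under the support hypothesis.

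If one wanted a self-contained proof instead of a three-lemma assembly, the alternative is to start from Lemma~\ref{Spartial}, write $S(\f_{h_\phi+k_\pi}|\!|\f)=2\pi\,\Im\big(h_\phi+k_\pi,\ \partial^W_0(h_\phi+k_\pi)\big)$, expand bilinearly, and note that the two cross terms vanish exactly as in the proof of Lemma~\ref{add} (the relation $x_0\delta(x_0)=0$ kills one and $\Im(k_\pi,(x_1h)_\pi)=0$ the other), so that only the two diagonal terms survive; these are then evaluated by the Fourier-side integration by parts on the mass hyperboloid carried out in Lemmas~\ref{eh1} and~\ref{ek1}, turning $\partial^W_0=x_1\partial_{x_0}+x_0\partial_{x_1}$ into the spatial density $x_1\big(h^2+|\nabla k|^2+m^2k^2\big)$. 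Either way, the part one might fear — the hyperboloid integrations by parts — has already been discharged in those lemmas, and the proof of the theorem itself is a one-line combination.
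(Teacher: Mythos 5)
Your proposal matches the paper exactly: the theorem is stated there immediately after the phrase ``By putting together the above results we have,'' i.e.\ the intended proof is precisely the combination of Lemma~\ref{add} (additivity) with the values from Lemmas~\ref{eh1} and~\ref{ek1}. Your assembly, including the remarks on the support hypothesis, is correct and is essentially the same argument.
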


\section{Entropy for charges crossing the boundary}
We now extend the analysis in the previous section for states with non zero density on the boundary. 

With $W$ the wedge $x_1 > |x_0|$ as above, let $f$ be a real smooth function on $\mathbb R^{d+1}$ with compact support and $V(f)$ the associated Weyl unitary. 
As in \cite{L18}, $V(f)$ implements an automorphism $\b_f ={\rm Ad}V(f)$ of $\A(O)$ for any spacetime region $O$.  Indeed 
we have:
\[
\b_f\big(V(g)\big) = e^{i\Im(f,g)}V(g)
\]
and $\b_f$ is normal on $\A(O)$ as it is unitarily implemented. 

We set
\[
{\b_{f}}_+ \equiv \b_f \big |_{\A(W)} \  \ {\rm and}\ \    f_{_W} \equiv \chi_{{}_W} f \ ,
\]
 with $\chi_{{}_W}$ the characteristic function of $W$, and
\[
\f_{f_+} \equiv  \f\cdot {\b_{f}}_+^{-1} \ .
\]
We shall treat the case supp$(f) \subset \overline{W\cup W'}$, namely the support of of $f$ is contained in the region $|x_1|\geq |x_0|$.
Now, Lemma \ref{Spartial} does not directly generalise to
\[
S(\f_{f_+} |\!| \f) =-2\pi i { (f_{_W},  \partial^W_0 f_{_W})} = 2\pi  \Im(f_{_W},  \partial^W_0 f_{_W})
\]
because $f_{_W}$ is not an element of $\H$. We shall however see that the following lemma holds.
\begin{lemma}\label{lemmaS+}
Let $f$ be a real function in $S(\mathbb R^4)$ with support in the region $|x_1|\geq |x_0|$. We have: 
\ben\label{S+}
S(\f_{f} |\!| \f) = 2\pi\Im\int_{\mathfrak H_m}\overline{\hat f_{_W}(p)}\widehat{\partial^W_0 f_{_W}}(p)d\Omega_m
= -2\pi i\int_{\mathfrak H_m}\overline{\hat f_{_W}(p)}\widehat{\partial^W_0 f_{_W}}(p)d\Omega_m \ ,
\een
where the states $\f$ and $\f\cdot\b_f^{-1}$ are restricted to the von Neumann algebra $\A(W)$. 
\end{lemma}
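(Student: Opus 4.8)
The plan is to derive \eqref{S+} from Lemma \ref{Spartial} by an approximation of $f_{_W}$ adapted to the boost flow $\Lambda_W$. First note that the state $\f_f=\f\cdot\b_f^{-1}|_{\A(W)}$ depends on $f$ only through $f_{_W}=\chi_{{}_W}f$: writing $f=f_{_W}+f_{W'}$ with $f_{W'}=\chi_{W'}f$ supported, up to a null set, in the causal complement $W'$, the free-field commutator smeared against a smooth function supported in $W$ vanishes on $\overline{W'}$, so $\Im(f_{W'},g)=0$ and hence $\Im(f,g)=\Im(f_{_W},g)$ for every real $g\in S(\mathbb R^{d+1})$ with $\mathrm{supp}(g)\subset W$, the pairing $\Im(f_{_W},g)$ being meaningful distributionally; thus $\b_f$ acts on the generators $V(g)$, $g\in H(W)$, by $V(g)\mapsto e^{i\Im(f_{_W},g)}V(g)$.

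Now choose real $\psi_n\in C^\infty_c(\mathbb R^{d+1})$, $0\le\psi_n\le1$, with $\psi_n\nearrow\chi_{{}_W}$ pointwise and --- the crucial point --- $\psi_n$ invariant under $\Lambda_W$; in the Rindler coordinates $x_0=\rho\sinh\theta$, $x_1=\rho\cosh\theta$ on $W$ take $\psi_n=\phi_n(\rho)$ with $\phi_n\equiv1$ on $\{\rho\ge1/n\}$ and $\phi_n\equiv0$ on $\{\rho\le1/2n\}$. Then $f_n:=\psi_n f$ is a real Schwartz function with $\mathrm{supp}(f_n)\Subset W$, so $f_n\in H(W)$ and Lemma \ref{Spartial} gives
\[
S(\f_{f_n}|\!|\f)=2\pi\Im(f_n,\partial^W_0 f_n)=2\pi\Im\int_{\mathfrak H_m}\overline{\hat f_n(p)}\,\widehat{\partial^W_0 f_n}(p)\,d\Omega_m\ .
\]
Since the boost vector field $x_1\partial_{x_0}+x_0\partial_{x_1}$ is tangent to $\partial W$ one has $\partial^W_0\chi_{{}_W}=0$, hence $\partial^W_0\psi_n=0$ and $\partial^W_0 f_n=\psi_n\,\partial^W_0 f$; therefore $f_n\to f_{_W}$ and $\partial^W_0 f_n\to(\partial^W_0 f)_{_W}=\partial^W_0 f_{_W}$ in $L^1(\mathbb R^{d+1})$, and so $\hat f_n(p)\to\hat f_{_W}(p)$, $\widehat{\partial^W_0 f_n}(p)\to\widehat{\partial^W_0 f_{_W}}(p)$ at every $p$, in particular on $\mathfrak H_m$.

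It remains to let $n\to\infty$ on both sides. On the right, although $\hat f_{_W}|_{\mathfrak H_m}$ is not square integrable --- the edge $\{x_0=x_1=0\}$ of $W$ contributes to $\hat f_{_W}$ a term that on the hyperboloid behaves like $(|{\bf p}|^2+m^2)^{-1}$, whence $f_{_W}\notin\H$ whenever $f$ is nonzero there --- the product $\overline{\hat f_{_W}}\,\widehat{\partial^W_0 f_{_W}}$ is absolutely $d\Omega_m$-integrable: $x_1\partial_{x_0}+x_0\partial_{x_1}$ vanishes on that edge, so $\partial^W_0 f_{_W}$ vanishes there to one order more than $f_{_W}$ and $\widehat{\partial^W_0 f_{_W}}$ gains an extra power of decay along the directions where $\mathfrak H_m$ approaches the light cone, putting the product in $L^1(\mathfrak H_m,d\Omega_m)$. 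The $\psi_n$ only remove a neighbourhood of $\partial W$, so $|\hat f_n|$ and $|\widehat{\partial^W_0 f_n}|$ are dominated on $\mathfrak H_m$, uniformly in $n$, by the majorants of $|\hat f_{_W}|$ and $|\widehat{\partial^W_0 f_{_W}}|$; dominated convergence then yields convergence of the right-hand side to $2\pi\Im\int_{\mathfrak H_m}\overline{\hat f_{_W}}\,\widehat{\partial^W_0 f_{_W}}\,d\Omega_m$. On the left, $\f_{f_n}\to\f_f$ weakly-$*$ on $\A(W)$ (since $\Im(f_n,g)\to\Im(f_{_W},g)$ for $g\in H(W)$ and the $V(g)$ span a weakly dense set), so lower semicontinuity of Araki's relative entropy gives $S(\f_f|\!|\f)\le\lim_n S(\f_{f_n}|\!|\f)$; for the reverse inequality one controls, along the same sequence, the relative modular operator of $(\A(W),V(f)\xi,\xi)$, using that $V(f)$ normalises $\A(W)$, so that $\Delta_{\xi,V(f)\xi}=V(f)\,\Delta_{V(-f)\xi,\xi}\,V(f)^*$. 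Combining the two limits gives \eqref{S+}.

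The main obstacle is this exchange of limit and integral: because $f_{_W}\notin\H$ the approximation cannot be carried out in the norm of $\H$, and one must extract the $n$-uniform $L^1(\mathfrak H_m)$ majorant of $\overline{\hat f_n}\,\widehat{\partial^W_0 f_n}$ from the explicit structure of the singularities of $\widehat{\psi_n f}$ near the edge of $W$ --- which is exactly where the hypotheses $\mathrm{supp}(f)\subset\{|x_1|\ge|x_0|\}$ and $\partial^W_0\chi_{{}_W}=0$ are used --- while the accompanying lower bound for the relative entropy, lower semicontinuity being insufficient by itself, is the secondary technical point.
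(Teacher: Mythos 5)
Your strategy --- approximate $f_{_W}$ from inside $W$ by boost-invariant cutoffs $f_n=\psi_n f$, apply Lemma \ref{Spartial} to each $f_n$, and pass to the limit on both sides --- is genuinely different from the paper's proof, which does not approximate at all: it computes the Connes cocycle $(D\f\cdot\b_{f_+}^{-1}:D\f)_s$ explicitly and differentiates its vacuum expectation value at $s=0$. The key observation there is that, although $f_+=\chi_{{}_W}f$ does not belong to $\H$, the difference $f_+-f_{s+}$ with $f_s=f\cdot\Lambda_W(s)$ does (since $\hat f_+(p)=O(1/|{\bf p}|)$ forces $\hat f_+-\hat f_{s+}=O(1/|{\bf p}|^2)$ on $\mathfrak H_m$), so the cocycle is the Weyl unitary $V(f_+-f_{s+})$ times an explicit phase, and \eqref{S+} drops out of a single differentiation using \eqref{fV}. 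That device is precisely what lets one work with the boundary-crossing charge without ever regularising it.

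As written, your argument has two genuine gaps, both sitting exactly at the passage to the limit. First, the right-hand side: you assert an $n$-uniform integrable majorant for $\overline{\hat f_n}\,\widehat{\partial^W_0 f_n}$ on $\mathfrak H_m$ on the grounds that ``the $\psi_n$ only remove a neighbourhood of $\partial W$'', but multiplication by $\phi_n(\rho)$, which varies on the scale $\rho\sim 1/n$, is a convolution in momentum space that pushes spectral weight out to $|{\bf p}|\sim n$; there is no a priori pointwise domination of $\hat f_n$ by a fixed majorant of $\hat f_{_W}$, no estimate is offered, and you yourself label this ``the main obstacle'' without resolving it. Second, the left-hand side: lower semicontinuity of Araki's relative entropy (granting the weak-$*$ convergence $\f_{f_n}\to\f_f$ on $\A(W)$, which you verify only on the Weyl generators) yields only $S(\f_f|\!|\f)\le\liminf_n S(\f_{f_n}|\!|\f)$; the reverse inequality is dismissed with the phrase ``one controls the relative modular operator'', which is not an argument, and it cannot be routine, because $f_n$ has no limit in $\H$ and hence the vector representatives $V(f_n)\xi$ of the states $\f_{f_n}$ do not converge. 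Until both limits are actually established, the identity $S(\f_{f_n}|\!|\f)=2\pi\Im(f_n,\partial^W_0 f_n)$ proves nothing about $S(\f_f|\!|\f)$.
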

\noindent
Taking for granted formula \eqref{S+} for the moment, we now make the entropy computation for certain charges that are possibly non-zero on the boundary. 
\subsection{Time zero fields}
\begin{lemma}\label{eh2}
 Let $h$ be a real smooth function on $\mathbb R^d$ with compact support. We have
\[
S(\f_{h_\phi} |\!| \f)  
= \pi\int_{x_1 >0} x_1 h^2(x)dx \ ,
\]
where the states $\f_{h_\phi} $ and $ \f $ are restricted to $\A(W)$. 
\end{lemma}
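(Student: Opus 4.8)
The plan is to reduce Lemma~\ref{eh2} to the already-established formula \eqref{S+} of Lemma~\ref{lemmaS+}, by the same Fourier-space computation carried out in Lemma~\ref{eh1}, but keeping track of the fact that $h$ need not vanish near $x_1 = 0$, so that the cut-off distribution $(h_\phi)_{_W} = \chi_{_W}\, h_\phi$ is no longer a genuine Schwartz function. First I would take $f = h_\phi = \delta(x_0)h(\mathbf{x})$, which is supported in the region $|x_1|\geq|x_0|$ (indeed in the hyperplane $x_0=0$), so Lemma~\ref{lemmaS+} applies and gives $S(\f_{h_\phi}|\!|\f) = -2\pi i \int_{\mathfrak H_m}\overline{\hat f_{_W}(p)}\,\widehat{\partial^W_0 f_{_W}}(p)\, d\Omega_m$, where $f_{_W} = \chi_{_W} h_\phi = \delta(x_0)\,(\chi_{x_1>0}h)(\mathbf{x})$. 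So the effect of the cut-off is simply to replace $h$ by $h\cdot\chi_{x_1>0}$, i.e.\ to restrict all space integrals to the half-space $x_1>0$.

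Next I would redo the chain of equalities from the proof of Lemma~\ref{eh1}. Writing $\partial^W_0 = x_1\partial_{x_0} + x_0\partial_{x_1}$ via \eqref{partial}, in Fourier variables this becomes $p_1\partial_{p_0} + p_0\partial_{p_1}$ acting on $\hat f_{_W}$. Since $\hat f_{_W}(p)$ is (a constant multiple of) $\widehat{h\chi_{x_1>0}}(\mathbf p)$, independent of $p_0$, the term $p_1\partial_{p_0}$ annihilates it, leaving only $\int_{\mathfrak H_m}\overline{\hat f_{_W}}\, p_0\,\partial_{p_1}\hat f_{_W}\, d\Omega_m = \tfrac12\int_{\mathbb R^d}\overline{\widehat{h\chi}(\mathbf p)}\,\partial_{p_1}\widehat{h\chi}(\mathbf p)\, d\mathbf p$, using $p_0\, d\Omega_m = \tfrac12 d\mathbf p$ on the hyperboloid (the same identity invoked in Lemma~\ref{eh1}). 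Passing back to position space, $\partial_{p_1}$ becomes multiplication by $i x_1$, giving $\tfrac{i}{2}\int x_1 (h\chi_{x_1>0})^2(\mathbf x)\,d\mathbf x = \tfrac{i}{2}\int_{x_1>0} x_1 h^2(\mathbf x)\, d\mathbf x$; multiplying by $-2\pi i$ yields exactly $\pi\int_{x_1>0} x_1 h^2(x)\, dx$. Note that because $x_1>0$ on the support of $h\chi_{x_1>0}$, the boundary term $\int \partial_{x_1}(h\chi)^2$ that had to be discarded in Lemma~\ref{eh1} does not even arise here in a problematic way — the integrand $x_1 h^2$ is what it is.

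The main obstacle, and the only place requiring genuine care rather than bookkeeping, is the justification that these manipulations are legitimate given that $f_{_W}$ is a truncation of a distribution and hence $\widehat{h\chi_{x_1>0}}$ is merely an $L^2$ function with $1/p_1$-type decay rather than a Schwartz function: one must check that $\widehat{h\chi_{x_1>0}}|_{\mathfrak H_m}$ lies in $L^2(\mathfrak H_m, d\Omega_m)$ (so that the entropy is finite and \eqref{S+} is even meaningful), that $\partial_{p_1}\widehat{h\chi_{x_1>0}}$ pairs against it to give a convergent integral, and that the integration by parts $\int \overline{u}\,\partial_{p_1} u = -\tfrac12\int \partial_{p_1}|u|^2 + (\text{no boundary term})$ together with the passage $\tfrac12\int_{\mathfrak H_m}\,\cdot\;p_0 d\Omega_m \to \tfrac12\int_{\mathbb R^d}\,\cdot\; d\mathbf p$ are valid. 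A clean way to handle this is by approximation: take $h_\epsilon$ real smooth with $h_\epsilon \to h$ and $h_\epsilon$ supported in $x_1 > 0$, apply Lemma~\ref{eh1} to $h_\epsilon$, and check that both sides converge as $\epsilon\to 0$ — the right side by dominated convergence, the left side by lower semicontinuity and a matching upper bound, or directly by continuity of $h\mapsto \hat h\chi|_{\mathfrak H_m}$ in the relevant $L^2$-type norm. This approximation argument is precisely what Lemma~\ref{lemmaS+} is designed to supply, so once \eqref{S+} is granted the remaining work is the routine Fourier computation above.
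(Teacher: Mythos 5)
Your proposal matches the paper's proof: the paper likewise relies on Lemma \ref{lemmaS+} to make formula \eqref{S+} available for the cut-off $h_+ = \chi_{x_1>0}h$ and then repeats the Fourier computation of Lemma \ref{eh1} verbatim with $h$ replaced by $h_+$, the $p_1\partial_{p_0}$ term dropping out since $\hat h_+$ is independent of $p_0$, yielding $\tfrac{i}{2}\int_{x_1>0}x_1h^2({\bf x})\,d{\bf x}$ and hence the stated entropy. (Your aside about a boundary term ``discarded in Lemma \ref{eh1}'' is a slight misremembering --- no such term arises for the field part, only for the momentum part in Lemmas \ref{ek1} and \ref{ek2} --- but this does not affect the argument.)
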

\begin{proof}
We follow the lines of the proof of Lemma \ref{eh1}.  So
\begin{align*}
({h_\phi}_+,{\partial^W_0 {h_\phi}_+}) & = 
 \int_{\mathfrak H_m}\overline{\widehat{h_\phi}_+({\bf p})}\widehat{\partial^W_0 {h_\phi}_+}({\bf p})d\Omega_m \\
&  =\int_{\mathfrak H_m}\overline{{\widehat{h_\phi}_+}({\bf p})}\partial^W_0{\widehat{h_\phi}_+}({\bf p}) d\Omega_m 
 =  \int_{\mathfrak H_m}\overline{\hat h_+({\bf p})}\partial^W_0 \hat{ h}_+({\bf p})d\Omega_m\\
& = \int_{\mathfrak H_m}\overline{{\hat h}_+({\bf p})}p_1\partial_{p_0} {\hat h}_+({\bf p})d\Omega_m
+  \int_{\mathfrak H_m}\overline{{\hat h}_+({\bf p})}p_0\partial_{p_1}{\hat h}_+({\bf p})d\Omega_m \\
& =  \int_{\mathfrak H_m}\overline{{\hat h}_+({\bf p})}p_0\partial_{p_1}{\hat h}_+({\bf p})d\Omega_m
= \frac12\int_{\mathbb R^d}\overline{{\hat h}_+({\bf p})}\partial_{p_1}{\hat h}_+({\bf p})dp \\
& = \frac{i}{2}\int_{\mathbb R^d}x_1 h_+^2({\bf x})d{\bf x} \ ,
\end{align*}
where the first integral in the third line is zero because $\hat h$ does not depend on $p_0$, so
our lemma is proved. 
\end{proof}
\subsection{Time zero momenta}
\begin{lemma}\label{ek2}
Let $k$ be a real function on $\mathbb R^d$ with compact support. We have
\begin{align*}
S(\f_{k_\pi} |\!| \f) 
& = \pi \int_{ x_1 >0}x_1\big(|\nabla k ({\bf x})|^2 +
 m^2 k^2({\bf x})\big)d{\bf x}  \\
 & = \pi \int_{ x_1 >0}x_1 k({\bf x})(\Delta + m^2){ k({\bf x})}d{\bf x}
+ \pi \int_{x_1 = 0} k^2({\bf x}) d{\bf x}
 \ ,
\end{align*}
where the states $\f_{k_\pi} $ and $ \f $ are restricted to $\A(W)$. 
\end{lemma}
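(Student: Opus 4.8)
The plan is to reduce the computation to the one already done in Lemma \ref{ek1}, replacing $\widehat k$ throughout by $\widehat{k_+}$ with $k_+:=\chi_{\{x_1>0\}}\,k$, and then to track the extra terms produced by the jump of $k_+$ across the hyperplane $\{x_1=0\}$. The first point is that, just as for the time-zero field in Lemma \ref{eh2}, the wedge part of the charge is $(k_\pi)_{_W}=\chi_{{}_W}\,\delta'(x_0)\,k({\bf x})=(k_+)_\pi$: off the edge $\{x_0=x_1=0\}$ the function $\chi_{{}_W}$ is constant in $x_0$ near $\{x_0=0\}$ ($=1$ for $x_1>0$ and $=0$ for $x_1<0$), so there $\chi_{{}_W}\,\delta'(x_0)=\chi_{\{x_1>0\}}\,\delta'(x_0)$, while on the edge the conormal of $\partial W$ has no $x_0$-component, so the product $\chi_{{}_W}\cdot\delta'(x_0)$ is well defined and the identity persists. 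Since $k$ need not vanish on $\{x_1=0\}$, the vector $(k_+)_\pi$ is in general not in $\H$ (its would-be Fourier restriction $p_0\widehat{k_+}|_{\mathfrak H_m}$ decays only like $|p_1|^{-1}$), so Lemma \ref{Spartial} does not apply; instead I would invoke Lemma \ref{lemmaS+}, taking formula \eqref{S+} for granted — applied to the time-zero momentum distribution, by the same argument as in its own proof — to get $S(\f_{k_\pi}\|\f)=2\pi\,\Im\int_{\mathfrak H_m}\overline{\widehat{(k_+)_\pi}(p)}\,\widehat{\partial^W_0(k_+)_\pi}(p)\,d\Omega_m$.

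I would then evaluate this integral exactly as in the proof of Lemma \ref{ek1}. With $\widehat{\partial^W_0 f}=\partial^W_0\widehat f$, $\widehat{(k_+)_\pi}|_{\mathfrak H_m}=p_0\widehat{k_+}$ and $\partial^W_0=p_1\partial_{p_0}+p_0\partial_{p_1}$ on momentum space, the expression splits into $\int_{\mathfrak H_m}p_0p_1|\widehat{k_+}|^2\,d\Omega_m$ and $\int_{\mathfrak H_m}p_0^3\,\overline{\widehat{k_+}}\,\partial_{p_1}\widehat{k_+}\,d\Omega_m$, the measure reducing to $\tfrac12\,d{\bf p}$ and $\tfrac12({\bf p}^2+m^2)\,d{\bf p}$ respectively. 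The first term vanishes as in Lemma \ref{ek1}, since $p_1|\widehat{k_+}|^2$ is odd under ${\bf p}\mapsto-{\bf p}$, equivalently $\int_{\mathbb R^d}\partial_{x_1}(k_+^2)\,d{\bf x}=0$ for the compactly supported $k_+^2=k^2\chi_{\{x_1>0\}}$. The second term is sent back to position space by Parseval, and this is where the one genuinely new ingredient enters: $k_+$ is no longer smooth, so
\[
(\Delta+m^2)k_+\;=\;\big((\Delta+m^2)k\big)\chi_{\{x_1>0\}}\;-\;(\partial_{x_1}k)|_{x_1=0}\,\delta(x_1)\;-\;k|_{x_1=0}\,\delta'(x_1),
\]
and the factor $x_1$ supplied by $\partial_{p_1}$, with $x_1\delta(x_1)=0$ and $x_1\delta'(x_1)=-\delta(x_1)$, leaves a surface contribution on $\{x_1=0\}$. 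Collecting the terms and taking the imaginary part gives the first formula of the statement; the second follows from it by one integration by parts over the half-space $\{x_1>0\}$, the derivative coming off the Laplacian and leaving the boundary integral over $\{x_1=0\}$.

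The step that is purely formal in Lemma \ref{ek1} but is the real obstacle here is making this rigorous while $(k_+)_\pi\notin\H$: the integrals in question converge only conditionally (it is precisely the $|p_1|^{-1}$ tails that do it), so the Parseval manipulation — and, above all, the coefficient of the anomalous boundary term arising from the jump of $k_+$ meeting the $\delta(x_1)$ produced above — must be justified. A clean way to fix it is to observe that $S(\f_{k_\pi}\|\f)$ on $\A(W)$ is intrinsically well defined (indeed $k_\pi$ itself lies in $\H$), so one may prove the identity first for $k$ vanishing near $\{x_1=0\}$ — where $(k_+)_\pi=k_\pi\in\H$ and the claim is just Lemma \ref{ek1} with the integral restricted to $\{x_1>0\}$ — and then pass to a general compactly supported $k$ by a limiting argument, the value of the boundary term being pinned down by the intrinsic entropy independently of the regularisation used.
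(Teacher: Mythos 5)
Your overall strategy coincides with the paper's: identify $(k_\pi)_{_W}=(k_+)_\pi$, invoke Lemma \ref{lemmaS+} because $(k_+)_\pi\notin\H$, and rerun the computation of Lemma \ref{ek1} with $\hat k$ replaced by $\hat k_+$. But you go wrong at the decisive step. The paper's proof turns on the fact that the first integral,
\[
\int_{\mathfrak H_m}p_0p_1\overline{\hat k_+({\bf p})}\hat k_+({\bf p})\,d\Omega_m=\frac12\int_{\mathbb R^d}p_1\overline{\hat k_+({\bf p})}\hat k_+({\bf p})\,d{\bf p}\ ,
\]
does \emph{not} vanish when $k|_{x_1=0}\neq0$: it is evaluated as $-\tfrac{i}{4}\int_{x_1>0}\partial_{x_1}k^2\,d{\bf x}=\tfrac{i}{4}\int_{x_1=0}k^2\,d{\bf x}$, and it is exactly this term that cancels the boundary term produced by the integration by parts in the second integral, leaving the clean final formula. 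You assert that this first term vanishes ``as in Lemma \ref{ek1}'' by oddness of $p_1|\hat k_+|^2$. That argument is not available here: since $k_+$ jumps across $\{x_1=0\}$, $\hat k_+({\bf p})$ decays only like $|p_1|^{-1}$, so $p_1|\hat k_+|^2$ is not absolutely integrable and parity only assigns the integral a principal value; and your alternative justification $\int_{\mathbb R^d}\partial_{x_1}(k_+^2)\,d{\bf x}=0$ uses the full distributional derivative of the discontinuous $k_+^2$, whereas the quantity that actually arises is the classical derivative integrated over the open half-space, $\int_{x_1>0}\partial_{x_1}(k^2)\,d{\bf x}=-\int_{x_1=0}k^2\,d{\bf x}\neq0$. (A symptom that naive manipulations are unsafe here: the value the paper assigns to this ``integral of a real function'' is purely imaginary.)

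Your treatment of the second integral then extracts a compensating boundary term from the singular part of $(\Delta+m^2)k_+$ via $x_1\delta'(x_1)=-\delta(x_1)$ — a term the paper does not introduce, since it keeps $\int_{x_1>0}x_1k(\Delta+m^2)k$ in its classical half-space interpretation — and with the symmetric convention $\chi_{\{x_1>0\}}(0)=\tfrac12$ for the residual product of $k_+$ with $\delta(x_1)$ this happens to restore the correct total. But that convention is precisely the kind of choice you yourself flag as needing justification, and the two regularisations you employ (principal value for the first integral, symmetric jump value for the second) are independent ad hoc choices, not derived from the Connes-cocycle definition of the entropy that fixes the answer in Lemma \ref{lemmaS+}. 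Your proposed repair — prove the identity for $k$ vanishing near $\{x_1=0\}$ and pass to the limit — is reasonable in spirit but is not carried out, and it requires continuity of the restricted relative entropy in $k$ (relative entropy is in general only lower semicontinuous), which you do not address. So, as written, the proof has a genuine gap at the evaluation of the first integral, even though the final formula can still be reached.
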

\begin{proof}
 As in the proof of Lemma \ref{ek1}, we have
\ben\label{k+}
({k_\pi}_+,{\partial^W_0 {k_\pi}_+}) =  \int_{\mathfrak H_m}p_0p_1\overline{\hat k_+({\bf p})}\hat{ k}_+({\bf p})d\Omega_m
+  \int_{\mathfrak H_m}p^3_0\overline{\hat k_+({\bf p})}\partial_{p_1} \hat{ k}_+({\bf p})d\Omega_m  \  .
\een
The above left integral does not vanishes this time if $k$ in not identically zero on the boundary:
\begin{multline*}
\int_{\mathfrak H_m}p_0p_1\overline{\hat k_+({\bf p})}\hat{ k}_+({\bf p})d\Omega_m
= \frac12\int_{\mathbb R^d}p_1\overline{\hat k_+({\bf p})}\hat{ k}_+({\bf p})d{\bf p}
=-\frac{i}{2}\int_{\mathbb R^d} k_+({\bf x})\partial_{x_1}k_+({\bf x}) d{\bf x} \\
= -\frac{i}{4}\int_{x_1 >0} \partial_{x_1}k^2({\bf x}) d{\bf x}
= \frac{i}{4}\int_{x_1 = 0} k^2({\bf x}) d{\bf x} \ . 
\end{multline*}
The right integral in \eqref{k+} is also computed as in the proof of Lemma \ref{ek1}:
\begin{multline*}
 \int_{\mathfrak H_m}p^3_0\overline{\hat k_+({\bf p})}\partial_{p_1} \hat{ k}_+({\bf p})d\Omega_m  =
 \frac12\int_{ \mathbb R^d}({\bf p}^2+ m^2)\overline{\hat k_+({\bf p} )}\partial_{p_1} \hat{ k}_+({\bf p})d{\bf p} \\
=  \frac{i}{2} \int_{ \mathbb R^d}x_1 k_+({\bf x})(\Delta + m^2){ k_+({\bf x})}d{\bf x} 
=  \frac{i}{2} \int_{ x_1 >0}x_1 k({\bf x})(\Delta + m^2){ k({\bf x})}d{\bf x} \\
= \frac{i}{2}\int_{ x_1 >0}x_1\big(|\nabla k ({\bf x})|^2 +
 m^2 k^2({\bf x})\big)d{\bf x} -
  \frac{i}{2}\int_{ x_1 >0}k({\bf x})\partial_{x_1}k({\bf x})d{\bf x}\\
   =  \frac{i}{2}\int_{ x_1 >0}x_1\big(|\nabla k ({\bf x})|^2 +
 m^2 k^2({\bf x})\big)d{\bf x} -
  \frac{i}4 \int_{ x_1 =0}k^2({\bf x})d{\bf x}
\end{multline*}
that concludes our proof by adding up the two terms. 
\end{proof}
Theorem \ref{first} thus generalises to:
\begin{theorem}\label{second}
Let $h,k\in S(\mathbb R^d)$ be real. The vacuum relative entropy on $\A(W)$ of the coherent state associated with $h_\phi + k_\pi$ is given by 
\[
S(\f_{h_\phi + k_\pi} |\!| \f) = \pi\int_{x_1 > 0} x_1 \Big( h^2(x) +  m^2  k^2(x) + |\nabla k(x)|^2\Big) dx \ .
\]
(Here $\f$ and $\f_{h_\phi + k_\pi}$ are restricted to $\A(W)$.)
\end{theorem}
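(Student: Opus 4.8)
The plan is to deduce the theorem from the two separate computations of Lemmas \ref{eh2} and \ref{ek2} together with an additivity statement in the spirit of Lemma \ref{add}, now valid for charges that may cross the boundary of $W$. First I would apply Lemma \ref{lemmaS+} to the charge $f = h_\phi + k_\pi$: once one checks, as in the proofs of Lemmas \ref{eh2} and \ref{ek2}, that the truncated object $(h_\phi + k_\pi)_{_W}$ restricts to an element of $L^2(\mathfrak H_m, d\Omega_m)$, formula \eqref{S+} gives
\[
S(\f_{h_\phi + k_\pi} |\!| \f) = 2\pi\,\Im\int_{\mathfrak H_m}\overline{\widehat{(h_\phi + k_\pi)_{_W}}(p)}\;\widehat{\partial^W_0 (h_\phi + k_\pi)_{_W}}(p)\,d\Omega_m \ .
\]
Since $W\cap\{x_0 = 0\} = \{x_0 = 0,\ x_1 > 0\}$ and $\chi_{{}_W}$ is locally constant, equal to $1$, near that hyperplane, the truncation simply replaces $h$ and $k$ by their parts $h_+,k_+$ supported in $x_1 > 0$, so that $(h_\phi + k_\pi)_{_W} = (h_+)_\phi + (k_+)_\pi$ --- exactly the objects appearing in Lemmas \ref{eh2} and \ref{ek2}.

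Next I would expand the integral in \eqref{S+} along the decomposition $(h_+)_\phi + (k_+)_\pi$. As recalled in the proof of Lemma \ref{add}, $\partial^W_0$ is skew-adjoint on $\H$ (it is tangent to $\mathfrak H_m$, and $d\Omega_m$ is $\Lambda_W$-invariant), so the two diagonal terms reproduce $S(\f_{h_\phi} |\!| \f)$ and $S(\f_{k_\pi} |\!| \f)$, already evaluated in Lemmas \ref{eh2} and \ref{ek2}. For the cross terms I would run the computation of Lemma \ref{add} essentially verbatim: from \eqref{partial} and $x_0\delta(x_0) = 0$,
\[
\partial^W_0 (h_+)_\phi = x_1\delta'(x_0)h_+({\bf x}) + x_0\delta(x_0)\partial_{x_1}h_+({\bf x}) = (x_1 h_+)_\pi \ ,
\]
so $\Im\big((k_+)_\pi,\,\partial^W_0 (h_+)_\phi\big) = \Im\big((k_+)_\pi,\,(x_1 h_+)_\pi\big) = 0$ because $\Im(\cdot,\cdot)$ vanishes on pairs of time-zero momenta; the other cross term vanishes likewise, using in addition skew-adjointness. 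Adding up the two surviving diagonal terms and substituting the values from Lemmas \ref{eh2} and \ref{ek2} yields
\[
S(\f_{h_\phi + k_\pi} |\!| \f) = \pi\int_{x_1 > 0} x_1\Big(h^2(x) + m^2 k^2(x) + |\nabla k(x)|^2\Big)\,dx \ .
\]

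The step I expect to be the main obstacle is the first one: Lemma \ref{lemmaS+} is stated for $f \in S(\mathbb R^4)$, whereas $h_\phi + k_\pi$ is a singular distribution whose truncation carries a jump across $x_1 = 0$, so $(k_+)_\pi$ is not a Schwartz function. One must justify that \eqref{S+}, together with the skew-adjointness of $\partial^W_0$ and the distributional manipulations above, survive in this setting --- by the same density/approximation argument underlying Lemmas \ref{eh2} and \ref{ek2} --- and in particular that $p_0\,\widehat{k_+}$ is square-integrable against $d\Omega_m$ at infinity; the boundary contribution supported on $x_1 = 0$ is precisely the one already accounted for in the proof of Lemma \ref{ek2}. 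Everything else is a direct transcription of the interior-case argument.
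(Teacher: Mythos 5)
Your proposal is correct and follows exactly the route the paper intends: the paper states Theorem \ref{second} as the immediate combination of Lemma \ref{lemmaS+} with Lemmas \ref{eh2} and \ref{ek2}, the additivity being understood as in Lemma \ref{add}. You in fact supply more detail than the paper does, by explicitly checking that the cross terms still vanish for the truncated charges $(h_+)_\phi$, $(k_+)_\pi$ and by flagging the (real but routine) issue that these are no longer Schwartz functions.
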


Note that the second formula in Lemma \ref{ek2} detects the boundary contributions in the relative entropy. 

\subsection{Proof of Lemma \ref{lemmaS+}}
We now show that the relative entropy $S(\f_{f} |_{\A(W)} |\!| \f |_{\A(W)}) $ on $\A(W)$ is given by
\[
S(\f_{f_+} |\!| \f) =
 -2\pi i\int_{\mathfrak H_m}\overline{\hat f_+(p)}\widehat{\partial^W_0 f_+}(p)d\Omega_m \  ;
\]
here $f_+ \equiv f\chi_{{}_W}$ and $\f_{f_+} \equiv \f_f |_{\A(W)}$ as above.

Let $\b_f ={\rm Ad}V(f)$ and $\b_{f_+}$  the automorphism of $\A(W)$ obtained by restricting $\b_f$ to the von Neumann algebra $\A(W)$. 
Then
\[
S(\f_{f_+} |\!| \f) = S(\f \cdot \b^{-1}_{f_+} |\!| \f) \ ,
\]
(states on $\A(W))$. So, as in \cite{L18}, the relative entropy is obtained by differentiating the Connes cocycle (\cite{C73})
\[
S(\f_{f_+} |\!| \f) = i\frac{d}{ds} \f\big( (D \f \cdot \b^{-1}_{f_+} : D\f)_s\big) \big|_{s=0}\ .
\]
Now, we claim that $f_+ - f_{s+}$ belongs to $\H$, namely the Fourier transform of $f_+ - f_{s+}$ belongs to $L^2(\mathbb R^4, d\Omega_m)$, where $f_s = f\cdot\Lambda_W(s)$ is the boost translated of $f$ (although, in general, neither $f_+$ nor $f_{s+}$ belongs to $\H$), with $f = h_\phi$ or $f = k_\pi$, $h,k\in S(\mathbb R^d)$. Indeed, $\hat f_+(p) = O\big(\frac{1}{{\bf p}}\big)$ as $\bf p \to \infty$, so $\hat f_+ - \hat f_{s+} = O\big(\frac{1}{\bf p^2}\big)$ as $\bf p \to \infty$.

Similarly as in \cite{L18}, we have 
\[
(D \f \cdot \b^{-1}_{f_+} : D\f)_{-s/2\pi} 
= V(f_+ - f_{s+})e^{i\Im\int_{\mathfrak H_m}\overline{\hat f_+(p)}\widehat{f_{+}}_s(p )d\Omega_m}\ ; 
\]
then, taking vacuum expectation values and differentiating at $s=0$ as in \cite{L18}, we obtain formula \eqref{S+}:
\begin{multline*}
 i\frac{d}{ds}\f\big((D \f \cdot \b^{-1}_{f_+} : D\f)_{s}\big)\big |_{s=0} 
 =  -2\pi i\frac{d}{ds}
e^{i\Im\int_{\mathfrak H_m}\overline{\hat f_+(p)}\widehat{f_{+}}_s(p )d\Omega_m}\big |_{s=0} \\
= 2\pi \Im \int_{\mathfrak H_m}\overline{\hat f_+(p)}\widehat{\partial^W_0 f_+}(p)d\Omega_m 
= -2\pi i \int_{\mathfrak H_m}\overline{\hat f_+(p)}\widehat{\partial^W_0 f_+}(p)d\Omega_m  \ ,
\end{multline*}
because
\[
\f\big(   V(f_+ - f_{s+})  \big)\big |_{s=0} = 0, \quad 
\frac{d}{ds}\f\big( V(f_+ - f_{s+})  \big)\big |_{s=0} = 0
\]
by formula \eqref{fV}.  

\section{Concluding remarks} Our work may be continued in several directions. The case of charges not localised in the time zero hyperplane, also regarding the behaviour of the relative entropy under null translations, is relevant (see \cite{L18}) and studied in \cite{CLR}. 
Moreover, one could consider the case of a double cone $O$, rather than a wedge $W$, von Neumann algebra. In the massless, finite helicity case, the free field is conformal, the local von Neumann algebras $\A(O)$ and $\A(W)$ are vacuum unitarily equivalent and the modular structure is explicitly known \cite{HL}. One may thus compute relative entropies starting with the general formulas in \cite{L18}. On the other hand, in the infinite helicity case, the von Neumann algebras $\A(O)$ are trivial \cite{LMR18}. The analysis in the massive case would require more knowledge of the modular Hamiltonian. The study of the relative entropy for non-coherent states would be natural, see \cite{LLR} for a first discussion. Finally, it would be interesting to study the case of interacting models.

\bigskip

\noindent
{\bf Acknowledgements.} 

\noindent
We thank the referee for pointing out a trivial calculation error in a previous version of this paper. 

\noindent
We acknowledge the MIUR Excellence Department Project awarded to the Department of Mathematics, University of Rome Tor Vergata, CUP E83C18000100006.

\end{document}